\newtheorem{theorem}{Theorem}
\newtheorem{corollary}{Corollary}[theorem]
\newtheorem{lemma}[theorem]{Lemma}
\newtheorem{mydef}{Definition}
\def\BState{\State\hskip-\ALG@thistlm}
\begin{document}
\title{Arbitrarily Varying Networks: Capacity-achieving Computationally Efficient Codes}
\author{Peida Tian, Sidharth Jaggi, Mayank Bakshi, Oliver Kosut
\thanks{P. Tian, S. Jaggi, and M. Bakshi are with the Chinese University of Hong Kong (e-mail: \hbox{tianpeida.cuhk@gmail.com}, \hbox{jaggi@ie.cuhk.edu.hk}, \hbox{mayank@inc.cuhk.edu.hk}). The work of P. Tian, S. Jaggi, and M. Bakshi described in this paper was partially supported by a grant from University Grants Committee of the Hong Kong Special Administrative Region, China (Project No. AoE/E-02/08).}
\thanks{O. Kosut is with the School of Electrical, Computer and Energy Engineering, Arizona State University (e-mail: \hbox{okosut@asu.edu}). This material is based upon work supported by the National Science Foundation under Grant No. CCF-1422358.}
}
\maketitle

\begin{abstract}
\label{sec:abstract}
%\textit{To be considered for the 2016 IEEE Jack Keil Wolf ISIT Student Paper Award.} 
We consider the problem of communication over a network containing a hidden and malicious adversary that can control a subset of network resources, and aims to disrupt communications. We focus on omniscient node-based adversaries, \emph{i.e.}, the adversaries can control a subset of nodes, and know the message, network code and packets on all links. Characterizing information-theoretically optimal communication rates as a function of network parameters and bounds on the adversarially controlled network is in general open, even for \emph{unicast} (single source, single destination) problems. In this work we characterize the information-theoretically optimal \emph{randomized capacity} of such problems, \emph{i.e.}, under the assumption that the source node shares (an asymptotically negligible amount of) independent common randomness with each network node \emph{a priori} (for instance, as part of network design). We propose a novel computationally-efficient communication scheme whose rate matches a natural information-theoretically ``erasure outer bound" on the optimal rate. Our schemes require no prior knowledge of network topology, and can be implemented in a distributed manner as an overlay on top of classical distributed linear network coding.
\end{abstract}

\section{Introduction}
\label{sec:introduction}
Network coding allows routers in networks to mix packets. This helps attain information-theoretically throughput for a variety of network communication problems; in particular for {\it network multicast}~\cite{ahlswede2000network,ho2003benefits}, often via linear coding operations~\cite{li2003linear,koetter2003algebraic}. Throughput-optimal network codes can be efficiently designed~\cite{jaggi2005polynomial}, and may even be implemented distributedly~\cite{ho2006random}. Also, network-coded communication is more robust to packet losses/link-failures~\cite{koetter2003algebraic,ho2003benefits,lun2005efficient}. 

However, when the network contains malicious nodes/links, due to the mixing nature of network coding, even a single erroneous packet can cause all packets at the receivers being corrupted. This motivates the problem of network error correction, which was first studied by Cai and Yeung in~\cite{yeung2006network,cai2006network}. They considered an {\it omniscient} adversary capable of injecting errors on {\it any} $z$ links, and showed that $C-2z$ was both an inner and outer bound on the optimal throughput, where $C$ is the {\it network-multicast min-cut}. Jaggi {\it et al.}~\cite{jaggi2007resilient} proposed efficient network codes to achieve this rate. In parallel, K{$\ddot{\text{o}}$}tter and Kschischang~\cite{koetter2008coding} developed a different and elegant approach based on subspace/rank-metric codes to achieve the same rate. Furthermore, when the adversary is of ``limited-view" in some manner (for instance, adversary can observe only a sufficiently small subset of transmissions, or is computationally bounded, or is ``causal/cannot predict future transmissions"), a higher rate is possible, and in fact~\cite{jaggi2007resilient,yao2014network,nutman2008adversarial} proposed a suite of network codes that achieve $C-z$, all of which meet the network Hamming bound in~\cite{yeung2006network}. A more refined adversary model is considered in \cite{zhang2015coding}.

Although communication in the presence of link-based adversaries is now relatively well-understood, problems where the adversaries are ``node-based" seem to be much more challenging. In node-based case, the adversaries can attack any subset of at most $z$ nodes by injecting errors on outgoing edges of those nodes. Since the adversary is restricted to control nodes, this places restrictions on the subsets of links it can control. This problem was first studied by Kosut {\it et al.} in~\cite{kosut2010adversaries, kosut2014polytope}, where it is shown that reducing node-based adversary to link-based one is too coarse, and linear codes are insufficient in general. A class of non-linear network codes was proposed~\cite{kosut2014polytope} to achieve capacity for a subset of planer networks, but the general problem of characterizing network capacity with node-based adversaries is still open. 

This problem has been studied from various perspectives. A cut-set bound was given in \cite{kosut2010adversaries, kosut2014polytope}. The routing-only capacity was studied in~\cite{che2013routing}. The work of \cite{kim2011network} explored the unequal link capacities, and \cite{kosut2014generalized} considered a general problem formulation subsuming both link-based and node-based adversaries. The fundamental complexity was examined in \cite{huang2015connecting}, where the authors showed the general network error correction is as hard as a long standing open problem, {\it i.e.} multiple unicast network coding. 

On the other hand, in coding theory, hard problems can be considerably simplified if terminal nodes share a small amount of common randomness. For instance, the capacity of adversarial bit-flip channel is still unknown in general, but it can be characterized if shared randomness is available. In fact, Langberg~\cite{langberg2004private} shows that $\Theta(\log n)$ bits shared secrets are sufficient to force such a powerful adversary to become ``random noise". 

Motivated by the power of shared secrets, this paper focuses on node-based adversary problems with a small amount of shared secrets. Under such settings, we provide a family of network codes that are computationally efficient and information-theoretically optimal. The shared secrets between source and \emph{each} other node can either be pre-allocated  or distributed by applying network secrets sharing schemes, for example, see, for example, \cite{shah2013secure}. In addition, our network codes can be distributedly implemented and work well even when the network topology is unknown.\footnote{Our code design in Section~\ref{sec:codeconstruction} requires no knowledge of network topology.}

The rest of the paper is organized as follows. We present a concrete example to describe previous adversary models and our model of this paper in Section~\ref{sec:example}. After introducing the general network model in Section~\ref{sec:model}, we present our main results in Section~\ref{sec:mainresults}. The details of the code construction and complexity analysis are in Section~\ref{sec:codeconstruction}. Finally, we briefly discuss the generalizations and conclude in Section~\ref{sec:conclusion}.

\section{An Illustrative Example}
\label{sec:example}
Consider the network in Figure~\ref{fig:toy example network}. The source Alice wishes to transmits her message $M$ to the destination Bob through an adversarial network. The adversary Calvin, hidden somewhere in the network, can control a subset of the network and tries to corrupt the communication from Alice to Bob. In addition, through the whole paper, we assume the adversary is omniscient but casual, that is, Calvin can observe all information transmitted on the links causally. Furthermore, we focus on computationally unbounded adversary (although it is also interesting to consider the computationally bounded adversary). The goal is to find the maximum communication rate in the presence of an adversary. Before studying the capacity of this network in our setting, we examine the capacity when there is no shared secret between nodes, both for the link-based adversary model and the node-based adversary model.

\subsection{Previous Work}
\textit{Link-based adversary:} Calvin can choose \textit{any two links} to attack. Although Alice and Bob know at most two links can be controlled, they have no idea a priori which two links that Calvin chooses. The maximum achievable rate of this model is shown to be $C-2z$ by~\cite{jaggi2007resilient} (here $C$ is the minimum min-cut from Alice to Bob in the network and  $z$ is the number of \textit{links} the adversary can attack). Therefore, the capacity is $0$ with such an adversary in this example. 

The following ``symmetrization" argument shows why reliable communication is impossible in this particular setting, \emph{i.e.} $C = 3$ and $z = 2$. For any message $M$ and any code used, suppose $x_a,x_b,x_c$ are the packets induced by $M$ and the network codes on links $(a,t),(b,t),(c,t)$, respectively. Knowing the message and network codes, Calvin can choose another message $M'\neq M$ and obtain corresponding packets $x_a',x_b',x_c'$ if $M'$ were transmitted. The adversary then replaces $x_b,x_c$ by $x_b',x_c'$ on links $(b,t),(c,t)$, respectively. After receiving $x_a,x_b',x_c'$, Bob cannot distinguish between the following two events: (a) $M$ is transmitted and links $b,c$ are attacked; (b) $M'$ is transmitted and link $a$ is attacked. Therefore, no communication is possible.     

\begin{figure}[!htbp]
\centering
\includegraphics[scale=0.5]{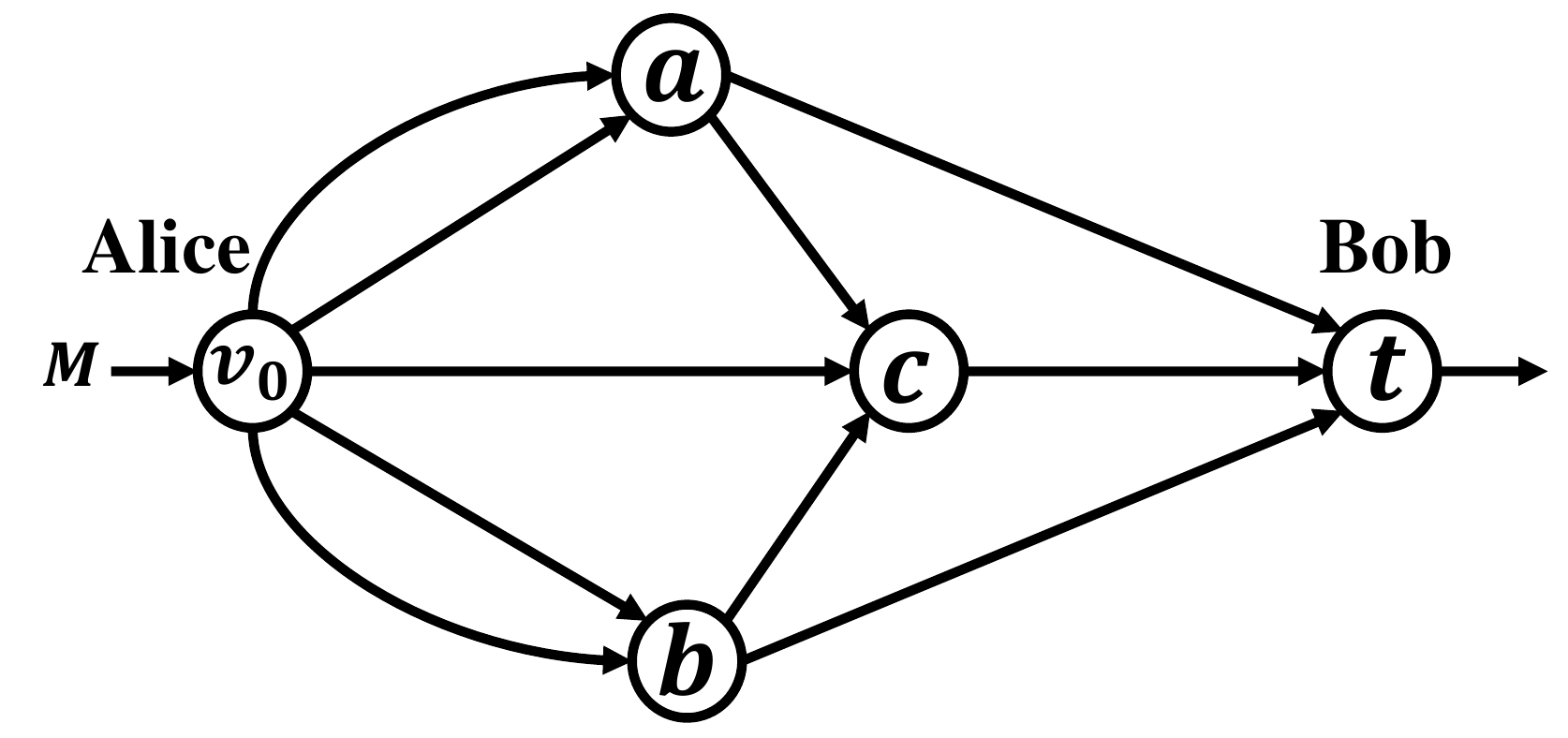}
\caption{An illustrative network example: source node $v_0$ (Alice) wishes to communicate to sink node $t$ (Bob), while adversary Calvin tries to corrupt the communication.}
\label{fig:toy example network}
\end{figure}

\textit{Node-based adversary:} Calvin can choose \textit{any one node} and attack on the outgoing links of his chosen node. Similarly, Alice and Bob know at most one node can be controlled but have no idea about which node. Since the out degree of any intermediate node is at most $2$, it is tempting to reduce such an adversary to previous link-based one,  and we get zero rate. However, it turns out that the capacity is 1, achievable by the following ``majority decoding" based code. Alice sends message $M$ on all outgoing links. All intermediate nodes perform majority decoding and forward the decoded message. A simple case-by-case analysis indicates that node $t$ can always decode $M$ correctly: If node $a$ is controlled, after majority decoding, nodes $b, c$ forward $M$ to node $t$. If node $c$ is controlled, then nodes $a,b$ forward $M$ to node $t$. The converse follows from the Singleton bound~\cite{Singleton:64} (see also~\cite[Theorem 1]{kosut2014polytope}).

\subsection{Our Model and Scheme Sketch}
In our model, in addition to treating the adversary as node-based (that can control \textit{any one node}), we also allow the source Alice to share independent common randomness (or shared secrets) with \textit{every} other node in the network. As we will see later, this is a crucial assumption that distinguishes our model from prior models. The shared secrets between source Alice and any other node, say node $a$, are sequences of bits only known to Alice and node $a$, that is, Calvin cannot access these bits unless he chooses to control node $a$. 

At a very superficial level, the role of the shared secret is to let a node \emph{just} downstream from the adversary detect any corrupted packets by verifying (using the shared secret) whether or not the received packets belong to the subspace spanned by original packets. We show that even with an asymptotically negligible rate of shared secret, rate $2$ is achievable in this example. Further, this is the best one can hope for as the adversary can always send zeros on one link in a min cut. 

In the remainder of this section, we describe the sketch of the achievability scheme for the above example. The detailed description for  general case is in Section~\ref{sec:codeconstruction}.
  
The code consists of a \emph{source encoder}, \emph{intermediate node encoders} and a \emph{destination decoder}. 
 
{\textbf{Source encoder}:} Let $r = 2$. There are two steps. First, as in random linear network coding~\cite{ho2006random}, each message $M\in\mathbb{F}_q^{nr}$ is encoded as a matrix $X\in\mathbb{F}_q^{r\times (n+r)}$ consisting of the information part $U\in\mathbb{F}_q^{r\times n}$ and coefficient header part $I\in\mathbb{F}_q^{r\times r}$ (identity matrix). The second step is crucial, which is computation of the hash header $h$. The two resulting vectors $(X_1,h)$ and $(X_2,h)$ are referred as original packets. Then random linear combinations $(a_1X_1+a_2X_2,h)$ are sent on outgoing links, where $a_1,a_2\in\mathbb{F}_q$.

In this example, $h$ consists of four parts $h_a,h_b,h_c,h_t$, each being a vector from $\mathbb{F}_q^{r^2}$ and corresponding to nodes $a,b,c,t$, respectively. The length of $h$ therefore is $\delta = 16$, which is independent of $n$. In the following, we formally describe the computation of $h_c$; all other hashes, {\em i.e.}, $h_a, h_b, h_t$ can be computed in a similar way.

Denote the shared secret $\mathbb{S}_c$ between source $v_0$ and node $c$ by $\mathbb{S}_c = (
s_{c,1}, s_{c,2}, d_{c,1,1},d_{c,1,2},d_{c,2,1},d_{c,2,1} )\in\mathbb{F}_q^{r^2+r} $, 
and rewrite $ h_c\in\mathbb{F}_q^{r^2}$ as $
h_c = ( h_{c,1,1},h_{c,1,2},h_{c,2,1},h_{c,2,2})$.
Then $h_c$ can be computed by the following so-called \textit{linearized polynomials} based on $X$ and $\mathbb{S}_c$: 
\[
h_{c,i,j} := d_{c,i,j} - \sum_{k=1}^{n+r}x_{ik}s_{c,j}^{p^k},\hspace{5mm}1\leq i,j\leq 2, 
\]
where $x_{ik}$ is the $(i,k)$-th entry in $X$. Some nice properties of the polynomials here allow the detection of corrupted packets, as we will see later. 

{\textbf{Verify-and-encode at intermediate nodes:}} Each intermediate node receives a collection of random linear combinations of original packets $X_1,X_2$ with hash headers. Every incoming packet to an intermediate node is verified using the shared secret and hash meant for that node. Depending on the outcome of the verification, each incoming packet is classified as valid if it is verified, and invalid otherwise. After this, the intermediate node sends random linear combinations of valid packets on all outgoing links. As an example, we  describe below a sample verification step at node $c$ for the packet received from node $a$. The general case is described in Section~\ref{sec:codeconstruction}. 

Suppose node $c$ receives $(W,h')$ from node $a$ and the coefficient header part in $W$ is $(1,2)$, then if $(W,h')$ is uncorrupted, $W = X_1 + 2X_2$ should hold. To verify, node $c$ computes $Q_1$ and $Q_2$ by 
\[ 
Q_1 := \sum_{k=1}^{n+r} w_k(s_{c,1} + s_{c,2})^{p^k}, 
\] 

\[
Q_2 :=  (d_{c,1,1} - h'_{c,1,1}) + (d_{c,1,2} - h'_{c,1,2}) + 2(d_{c,2,1} - h'_{c,2,1}) + 2(d_{c,2,2} - h'_{c,2,2}),
\]
where $W = (w_1,\ldots,w_{n+r})$, $h' = (h_a',h_b',h_c',h_t')$ and $h_c' = ( h_{c,1,1}', h_{c,1,2}', h_{c,2,1}',h_{c,2,2}')$. 
Packet $W$ is regarded as valid if $Q_1 = Q_2$, and invalid if $Q_1\neq Q_2$. We prove in Appendix~\ref{app:lemma} that if $W\neq X_1 + 2X_2$, then  $Q_1\neq Q_2$ w.h.p. Details are not important at present and the point here is to sketch the verification process. 

{\textbf{Decoder:}} Let $(Y_1,h_1)$, $(Y_2,h_2)$ and $(Y_3,h_3)$ be the packets received on links $(a,t)$, $(b,t)$ and $(c,t)$. Bob first verifies $Y_i$'s by previously described verification process. Since at most one node is controlled, at least two of $Y_1,Y_2,Y_3$ are valid, say $Y_1,Y_2$. Then decoding is done by solving the system of linear equations 
\[
\begin{pmatrix}
Y_1\\
Y_2
\end{pmatrix} = T
\begin{pmatrix}
X_1\\
X_2
\end{pmatrix},\notag
\]
where $T$ is the network transformation and can be obtained directly from the coefficients header part. As shown in~\cite{ho2006random}, $T$ is full rank w.h.p., therefore, rate 2 is achieved.

\section{Definitions and Network Model} 
\label{sec:model}
\subsection{Definitions}
\emph{Notation:} For a positive integer $k$,  $[k]$ denotes the set $\lbrace 1,2,\ldots,k\rbrace$. Whenever there is little scope for ambiguity, we simply use $F$ to denote the finite field $\mathbb{F}_q$, where $q = p^m$ for some prime $p$ and integer $m$.

A \emph{directed graph} is given by ${\cal G}=({\cal V},{\cal E})$, where ${\cal V}$ is the \emph{vertex set} and ${\cal E}\subset{\cal V}\times{\cal V}$ is the \emph{edge set}.  For an edge $e:=(i,j)$, we say $\text{tail}(e)=i$ and $\text{head}(e)=j$. For a node $i\in{\cal G}$, denote the set of \emph{incoming edges} as ${\cal E}_{\text{in}}(i): = \lbrace e\in{\cal E}|\text{head}(e)  = i\rbrace$, the set of \emph{outgoing edges} as ${\cal E}_{\text{out}}: = \lbrace e\in{\cal E}|\text{tail}(e)  = i\rbrace$, the set of \emph{upstream neighbours} as ${\cal N}_{\text{in}}(i):=\lbrace j\in{\cal V}|(j,i) \in {\cal E}_{\text{in}}(i)\rbrace$, and the set of \emph{downstream neighbours} as ${\cal N}_{\text{out}}(i):=\lbrace j\in{\cal V}|(i,j) \in {\cal E}_{\text{out}}(i)\rbrace$. For any subset of edges $A\subset{\cal E}$, the subgraph ${\cal G}_A$ is the graph obtained by deleting all edges in $A$; i.e. ${\cal G}_A=({\cal V},{\cal E}\setminus A)$. For a subset of nodes $Z\subset{\cal V}$, we similarly define the subgraph obtained by deleting nodes $Z$ by ${\cal G}_Z$. 

A \emph{cut} is a subset of nodes $B\subset{\cal V}$. Given subgraph ${\cal G}'=({\cal V}',{\cal E}')$, the \emph{cut-set} of cut $B$ on ${\cal G}'$ is given by
\[
\text{cut-set}(B;{\cal G}'):=\lbrace e\in{\cal E}'|e = (a,b), a\in B\cap{\cal V}', b\in {\cal V}'\setminus B\rbrace.
\] 
Given two nodes $v_1,v_2$, the \emph{minimum cut} from $v_1$ to $v_2$ on the subgraph ${\cal G}'$ is given by 
\[
\text{min-cut}(v_1,v_2;{\cal G}'):= \min_{B:v_1\in B, v_2\notin B} |\text{cut-set}(B;{\cal G}')|.
\] 
For example, in Figure~\ref{fig:toy example network}, we have $\text{min-cut}(v_0,t;{\cal G}_a) = 2$.

\subsection{Network Model}

\textbf{Network:} We model the network with a directed graph ${\cal G}=({\cal V},{\cal E})$, where ${\cal V}$ is the set of nodes (routers) and $\cal E$ is the set of links between nodes. We assume that all links have unit capacity\footnote{We discuss networks with unequal link capacities in Section~\ref{sec:conclusion}}, meaning that they can carry one symbol from $F$ per time step. We also allow multiple edges connecting the same pair of nodes. The source node $v_0$ (Alice) wishes to multicast the message $M$, which is a vector chosen uniformly at random form $F^{nr}$, to a set of destination nodes ${\cal D}:=\lbrace t_1,\ldots,t_K\rbrace$. The network model is illustrated in Figure~\ref{NetworkModel}. 

\begin{figure}[!htbp]
\centering
\includegraphics[scale = 0.5]{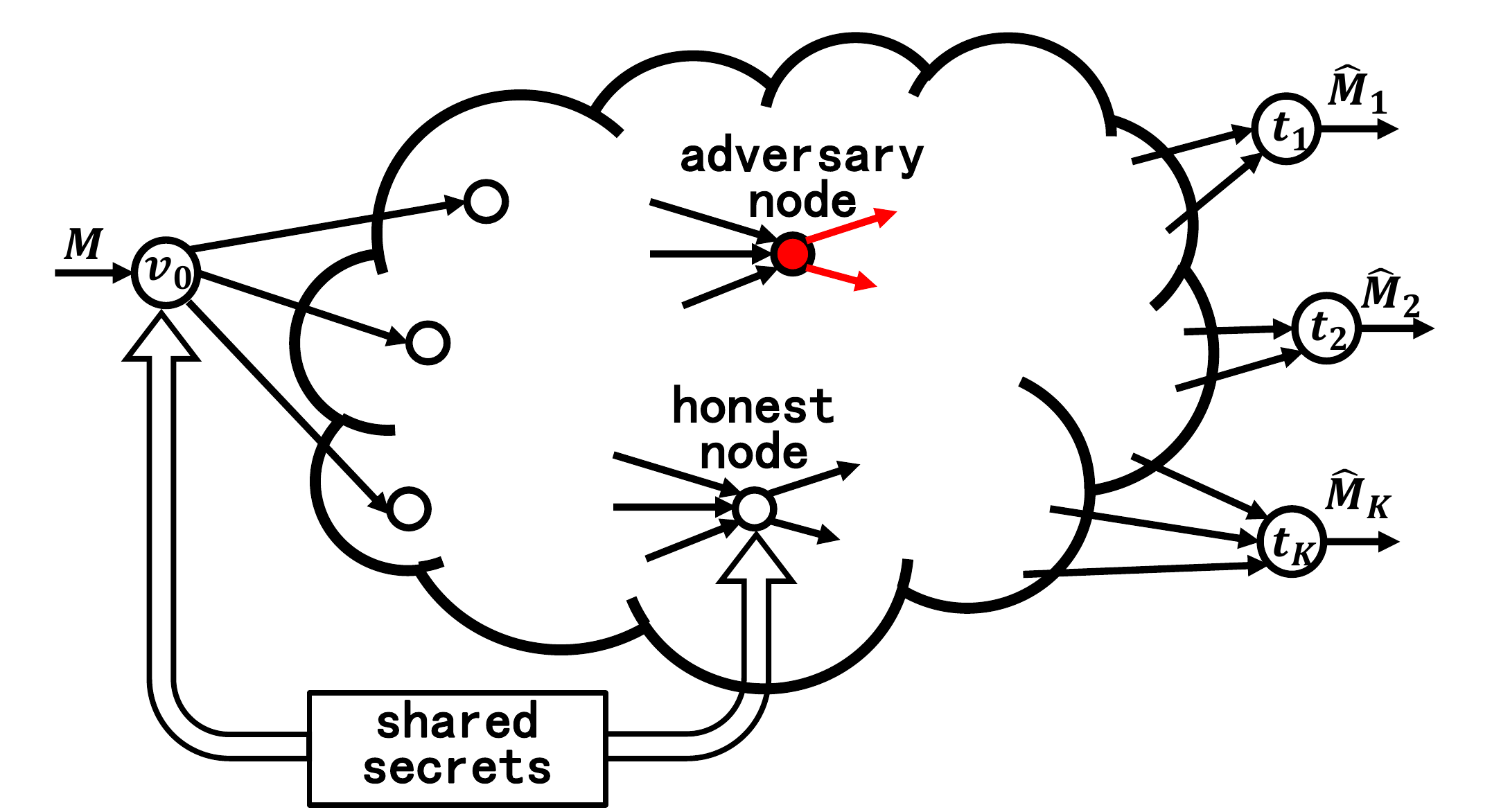}
\caption{General network model: Source $v_0$ wishes to multicast message $M$ to a set of destinations $t_{i}$'s. Adversary can selects any $z$ nodes and attack on outgoing links of these $z$ nodes. Source $v_0$ shared common randomness of negligible rate with every other node in the network. \label{NetworkModel}}
\end{figure}

\textbf{Shared secrets:} For each node $v$ other than Alice, there is an $s$-length vector $\mathbb{S}_v$, drawn uniformly at random from $F^s$, known only to Alice and node $v$.

\textbf{Adversary model: }We consider a node-based adversary (Calvin). That is, Calvin can control any $z$ nodes from ${\cal V}\backslash \lbrace v_0,t_1,\ldots,t_K\rbrace$ and transmit any information in the outgoing edges of these $z$ nodes. In other words, let ${\cal A}$ be the collection of all sets of outgoing edges of any $z$ nodes in ${\cal V}\backslash \lbrace v_0,t_1,\ldots,t_K\rbrace$; the adversary can choose any one set $A\in {\cal A}$ and inject arbitrary information on links in $A$.

The adversary is \emph{omniscient}. That is, Calvin knows the message, the code, and all packets transmitted in the network. For the $z$ nodes that Calvin selects to control, he also knows their shared secrets with source node. Calvin does not know shared secrets between the source and honest nodes (nodes that Calvin cannot control).

The table in Figure~\ref{fig:wkww} summarizes the knowledge of each communication party.

\begin{figure}[!htbp]
\centering
\includegraphics[scale=0.5]{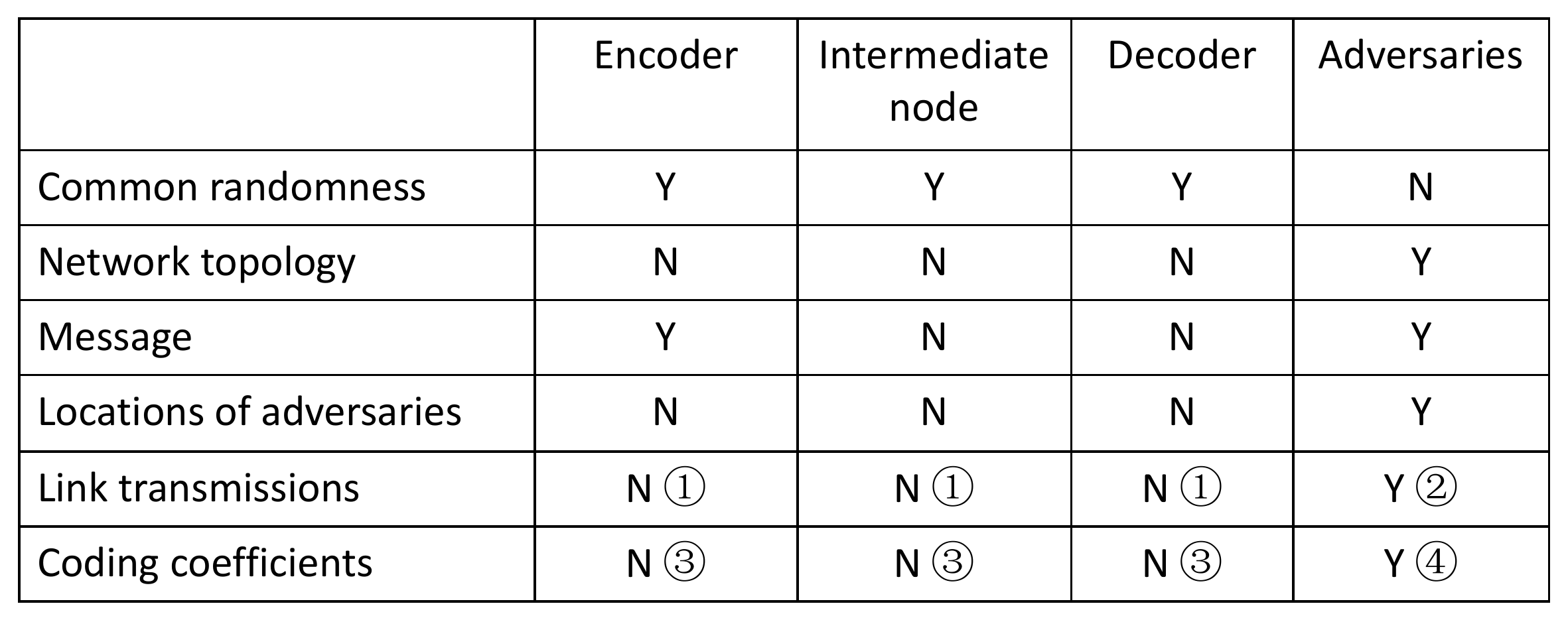}
\caption{Summary of ``who knows what when": ``Y" represents yes and ``N" represents no. Adversaries do not know the common randomness between source and honest nodes. Source, intermediate nodes and destination have no knowledge of the network topology, while the adversaries know the network topology completely. We also assume the adversaries know the transmitted message and the coding coefficients/link transmissions non-causally (\textcircled{2}\textcircled{4}). By \textcircled{1} and \textcircled{3}, we mean that the encoder, intermediate nodes and decoder only have partial knowledge on link transmissions and coding coefficients. The chronological order of the left most column describes the following relations: common randomness, network topology and message are independent with each other. The adversaries then based on the knowledge to determine the adversarial locations. Link transmissions and coding coefficients depend on common randomness, network topology, message and adversaries' locations.}
\label{fig:wkww}
\end{figure}

\textbf{Code:} A code consists of the following:
\begin{itemize}
\item \emph{Link encoders:} For each node $v$ and each link $e\in{\cal E}_{\text{out}}(v)$, a function that gives the symbol to send on $e$, given all information available to node $v$: shared secrets for node $v$, all packets from edges in ${\cal E}_{\text{in}}(v)$, and, if $v$ is Alice, the message, and shared secrets for all nodes.
\item \emph{Decoder:} For each destination node $t_k$ and each message $M_i$ for $i\in[q^{nr}]$, a function for estimating $M_i$ based on information available at node $t_k$: shared secrets and received packets.
\end{itemize}

\textbf{Code metrics:} We evaluate codes based on the following quantities:
\begin{itemize}
\item \emph{Field size} $q$,

\item \emph{Shared secret dimension} $s$,

\item \emph{Number of message symbols} $q^{nr}$,

\item \emph{Probability of error}: for any $A\in\mathcal{A}$, let $P_e(A)$ be the probability that $\hat{M}_{t_k,i}\ne M_i$ for any $t_k\in\cal D$ and any $i\in[q^{nr}]$, maximized over all possible data injections on edges in $A$, 

\item \emph{Total blocklength (bits)}, denoted by $N$,

\item \emph{Rate} $R=(nr\log q)/N$, where the base of the logarithm is 2 throughout the paper, 

\item \emph{Complexity} $\mathcal{T}$, including encoding/decoding complexity.

\end{itemize}

\section{Main Results}
\label{sec:mainresults}
We begin by stating our main results: computationally efficient achievability. Generally speaking, the natural ``erasure outer bound" can be achieved with asymptotically negligible shared secrets. In addition, computationally efficient such codes can be constructed as described in subsequent sections.   

\begin{theorem}\label{thm:no_feedback}
For any $\epsilon>0$, there exists a code satisfying
\begin{itemize}
\item $R\ge \min_{A\in{\cal A}}\min_{t_k\in\cal D}\textup{min-cut}(v_0,t_k;{\cal G}_A)-\epsilon$,
\item $P_e(A)\le \epsilon$ for all $A\in{\cal A}$,
\item $s\le \epsilon N$,
\item $N = \Theta(1/\epsilon^2)$,
\item $\mathcal{T}$ is at most polynomial in $N$ (or equivalently, in $m,n$) and network parameters $|{\cal V}|$ and $d_{\text{in}}$, as shown in Table~\ref{table:cc}. 
\end{itemize}

\end{theorem}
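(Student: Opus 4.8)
The plan is to construct the code sketched in Section~\ref{sec:example} for general networks, and then verify each of the five bullets in Theorem~\ref{thm:no_feedback} in turn. The backbone is distributed random linear network coding~\cite{ho2006random}: Alice encodes $M \in F^{nr}$ as $X = [U \mid I] \in F^{r \times (n+r)}$, and every node (Alice and intermediate nodes alike) transmits on each outgoing link a uniformly random $F$-linear combination of the packets it deems \emph{valid}. The new ingredient is the hash header $h$, which carries, for every node $v \in {\cal V}\setminus\{v_0\}$, a block $h_v \in F^{r^2}$ computed from $X$ and the shared secret $\mathbb{S}_v$ via the linearized-polynomial construction shown in the example. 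I would first state the hashing/verification primitive as a stand-alone lemma: if a node $v$ receives a packet $(W,h')$ claiming to be the linear combination $\sum_j c_j X_j$ of the original packets but in fact $W \neq \sum_j c_j X_j$ (equivalently $h'_v$ is inconsistent with $W$ under $\mathbb{S}_v$), then the local check $Q_1 = Q_2$ fails except with probability $O(\mathrm{poly}(n+r)/q)$ over the randomness of $\mathbb{S}_v$; this is exactly the statement deferred to Appendix~\ref{app:lemma}, and it rests on the fact that a nonzero linearized polynomial of $q$-degree at most $n+r$ has at most $p^{\,n+r}$ roots. A union bound over all $|{\cal V}|$ nodes and all packets then ensures that \emph{with high probability over the shared secrets, every honest node's verifier accepts a packet only if that packet is a genuine linear combination of $X_1,\dots,X_r$}.

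Granting that event, the achievable-rate analysis becomes a purely combinatorial min-cut argument. Fix an adversarial set $A \in {\cal A}$, i.e.\ the outgoing edges of some $z$ nodes. On the subgraph ${\cal G}_A$ (those $z$ nodes deleted) the honest nodes run ordinary random linear network coding on packets that — by the verification guarantee — lie entirely in the row space of $X$; moreover a node immediately downstream of an adversarial node discards any injected packet that is not such a combination, so corrupted packets simply never propagate. Hence the subspace of packets reaching destination $t_k$ that passes verification is, w.h.p.\ over the coding coefficients, the full row space of $X$ provided $\mathrm{min\textrm{-}cut}(v_0,t_k;{\cal G}_A) \ge r$; this is the standard Ho--et~al.\ full-rank-transfer-matrix argument, applied to ${\cal G}_A$ rather than ${\cal G}$. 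Taking $r \le \min_{A}\min_{t_k}\mathrm{min\textrm{-}cut}(v_0,t_k;{\cal G}_A)$ and accounting for the header overhead $(n+r+\delta)/n$ with $\delta = |{\cal V}|\,r^2$ a constant in $n$, the rate is $(nr\log q)/N \ge r - \epsilon$ once $n$ is chosen large enough relative to $\delta$ and $r$; choosing $q$ polynomially large in $n$ drives every ``w.h.p.'' failure probability below $\epsilon$, which gives $P_e(A) \le \epsilon$ for every $A$ simultaneously. For the remaining bullets: $s = r^2 + r$ symbols per node is constant, so $s \le \epsilon N$ for $N$ large; setting $N = \Theta(1/\epsilon^2)$ (equivalently $n = \Theta(1/\epsilon^2)$ with $q$ fixed-ish) balances the $\epsilon$ from header overhead against the $\epsilon$ from the union-bound error and is the natural choice once the two error sources are made explicit; and the complexity bullet follows by counting field operations — each linearized-polynomial evaluation costs $O((n+r)\log q)$ work via repeated Frobenius squaring, each node handles at most $d_{\mathrm{in}}$ packets, giving the $\mathrm{poly}(N,|{\cal V}|,d_{\mathrm{in}})$ bound tabulated in Table~\ref{table:cc}, with decoding being a single $r \times r$ linear solve.

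The main obstacle I anticipate is not any single step but the care needed in the verification-guarantee step when the \emph{adversary} is allowed to craft packets adaptively with full knowledge of the code and of all honest transmissions. The subtlety is that Calvin sees the hash headers $h_v$ for honest $v$ as they fly by on the wires, so one must argue that observing $h_v$ (a deterministic function of $X$ and $\mathbb{S}_v$) does not let him produce a $W' \neq \sum_j c_j X_j$ together with a header that the verifier at $v$ nonetheless accepts — i.e.\ that the linearized-polynomial hash remains collision-resistant \emph{conditioned on the transcript the adversary has seen}. The right formulation is that for each honest node $v$, the pair $(s_{v,1},\dots,s_{v,r})$ retains enough conditional entropy given everything outside $\mathbb{S}_v$ that the root-counting bound still applies; since the hash is essentially a universal hash keyed by $\mathbb{S}_v$ and the adversary never sees $\mathbb{S}_v$ itself (only $X$ and the induced $h_v$), the collision probability over the residual randomness is still $O(\mathrm{poly}(n+r)/q)$, but making this rigorous — and handling the fact that a packet may traverse several honest verifiers in sequence, and that the adversary may attack nodes whose shared secrets he then learns — is where the bulk of the technical work lies, and is presumably the content of Section~\ref{sec:codeconstruction} together with Appendix~\ref{app:lemma}.
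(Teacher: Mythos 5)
Your overall architecture matches the paper's (RLNC backbone, per-node SLP hashes keyed by shared secrets, verify-and-forward, a union bound over the at most $z\cdot d_{\max}$ adversarial edges, and the full-rank transfer-matrix argument on ${\cal G}_A$), and your worry about the adversary observing the headers $h_v$ is legitimate but resolved exactly as you suspect: $d_{v,i,j}$ acts as a one-time pad, so $h_v$ reveals nothing about $s_{v,j}$. The genuine gap is in your quantitative handling of the hash. A single-variable linearized polynomial $\sum_{l=1}^{n+r} a_l x^{p^l}$ has degree $p^{n+r}$, so the Schwartz--Zippel bound gives collision probability $p^{n+r}/q$ --- \emph{not} $O(\mathrm{poly}(n+r)/q)$ as you claim. (You even state correctly that it has up to $p^{n+r}$ roots, then silently replace the exponential by a polynomial.) Consequently ``choosing $q$ polynomially large in $n$'' does not drive the failure probability down; it drives it to $1$. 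The paper must take $q=p^m$ with $m=cn$, $c>1$ (concretely $m=2n$), so that the per-packet miss probability is $p^{-(c-1)n+r}$.

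This is not a cosmetic fix, because the exponential field size is precisely what produces the fourth bullet. With $m=\Theta(n)$ the blocklength in bits is $N=\Theta\bigl((n+r+\delta)m\bigr)=\Theta(n^2)$, so $n=\Theta(\sqrt{N})$; the rate condition $nr/(n+r+\delta)>r-\epsilon$ with $\delta=(|{\cal V}|-1)r^2$ forces $n=\Theta(1/\epsilon)$, and hence $N=\Theta(1/\epsilon^2)$, while the error condition $z\,d_{\max}\,2^{-\sqrt{N/2}}<\epsilon$ is then satisfied automatically. Your alternative accounting (``$N=\Theta(1/\epsilon^2)$, equivalently $n=\Theta(1/\epsilon^2)$ with $q$ fixed-ish'') is internally inconsistent --- with $q$ fixed, $N=\Theta(n)$ and the rate only needs $n=\Theta(1/\epsilon)$ --- and, more importantly, with $q$ fixed or polynomial in $n$ the verification lemma fails outright, so the $P_e(A)\le\epsilon$ bullet is unproved. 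To close the gap, state Lemma~\ref{lemma1} with the correct bound $p^{n+r}/p^{m}$, fix $m=2n$, and redo the blocklength bookkeeping from $N=\Theta(nm)$.
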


\begin{table}
\begin{center}
\begin{tabular}{|l|l|l|}
\hline
 & Complexity in terms of $m,n$ & Complexity in terms of $N$\\ 
\hline
Source & ${\cal O}(|{\cal V}|r^2nm \log m))$  & ${\cal O}(|{\cal V}|r^2N\log N))$ \\
Internal node & $\mathcal{O}(d_{\text{in}}nm\log m)$ & $\mathcal{O}(d_{\text{in}}N\log N)$ \\
Decoder & $\mathcal{O}((d_{\text{in}}+r^2)nm\log m)$ & $\mathcal{O}((d_{\text{in}}+r^2)N\log N)$ \\
\hline
\end{tabular}
\caption[Table caption text]{Computational complexity}
\label{table:cc}
\end{center}
\end{table}

For comparison, we also state two converse results that follow fairly directly from results already in the literature. The first states that the rate can be no larger than if the adversarial edges were simply deleted, and this ``erasure outer bound" follows from \cite{ahlswede2000network} when the residual graph is considered. This confirms that the rate achieved in Theorem~\ref{thm:no_feedback} is essentially as large as possible. The second result states that shared secrets are necessary to achieve vanishing probability of error for all adversary sets. Proofs are in the appendices.

\begin{theorem}\label{thm:converse1}
For any code if the adversary controls links $A$, the rate is upper bounded by
\[
R\le\frac{1}{1-P_e(A)}\left[ \min_{t_k\in\cal D} \textup{min-cut}(v_0,t_k;{\cal G}_A)+\frac{H(P_e(A))}{N}\right]
\]
where $H(\cdot)$ is the binary entropy function.
\end{theorem}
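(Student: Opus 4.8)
The plan is to exploit the fact that an adversary who controls the link set $A$ can, as one of its available strategies, \emph{simulate the erasure} of those links, which reduces the problem to adversary-free multicast over the residual graph ${\cal G}_A$; the stated bound is then just the classical cut-set converse for network multicast, combined with Fano's inequality to absorb the nonzero error probability.

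First I would fix the particular adversarial strategy in which Calvin transmits a fixed symbol, say $0\in F$, on every link of $A$ at every time step. This is one admissible data injection on $A$, so the probability that \emph{some} destination errs under it is at most $P_e(A)$. Under this strategy the symbols on links in $A$ are constants, independent of the message $M$ and of all coding randomness and shared secrets; consequently every packet appearing anywhere in the network is exactly the same deterministic function of $(M,\text{shared secrets},\text{coding randomness})$ as it would be if the links in $A$ were simply deleted. Hence it suffices to prove the rate bound for an (adversary-free) multicast code on ${\cal G}_A$ whose probability of error is at most $P_e(A)$.

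Next I would run the standard cut-set/Fano argument on ${\cal G}_A$. Let $t^*\in{\cal D}$ attain $c:=\min_{t_k\in{\cal D}}\textup{min-cut}(v_0,t_k;{\cal G}_A)$, and let $B^*$ be a cut separating $v_0$ from $t^*$ in ${\cal G}_A$ with $|\textup{cut-set}(B^*;{\cal G}_A)|=c$. Let $Z$ be the tuple of all symbols transmitted on the $c$ links of $\textup{cut-set}(B^*;{\cal G}_A)$ over the whole block; since those $c$ links together carry at most $cN$ bits during the block, $H(Z)\le cN$. Because $v_0\in B^*$ and $t^*\notin B^*$, unrolling the link encoders shows that every packet received by $t^*$, hence its estimate $\hat M_{t^*}$, is a deterministic function of $Z$ and of $W$, where $W$ collects the shared secrets and local coding randomness of the nodes on the $t^*$-side of the cut (in particular $\mathbb{S}_{t^*}$). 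Since $M$ is independent of $W$,
\[
I(M;\hat M_{t^*})\le I(M;Z,W)=I(M;W)+I(M;Z\mid W)=I(M;Z\mid W)\le H(Z\mid W)\le H(Z)\le cN.
\]
On the other hand $M$ is uniform on $F^{nr}$, so $H(M)=nr\log q=RN$, and since $t^*$ decodes correctly with probability at least $1-P_e(A)$ under this injection, Fano's inequality gives $H(M\mid\hat M_{t^*})\le H(P_e(A))+P_e(A)\log q^{nr}=H(P_e(A))+P_e(A)\,RN$. Chaining,
\[
RN=H(M)=I(M;\hat M_{t^*})+H(M\mid\hat M_{t^*})\le cN+H(P_e(A))+P_e(A)\,RN,
\]
and rearranging yields $R\le\frac{1}{1-P_e(A)}\left[c+H(P_e(A))/N\right]$, which is the claim. (Equivalently, the adversary-free part can simply be quoted from~\cite{ahlswede2000network} applied to ${\cal G}_A$.)

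The one point that needs care — and that is absent from the classical multicast converse — is the shared randomness: a priori the destination's secret $\mathbb{S}_{t^*}$ might let it extract more information about $M$ than physically crosses the cut $B^*$. The displayed chain resolves this: because $W$ (all secrets and local randomness on the far side of the cut) is independent of $M$, conditioning on $W$ can only shrink the entropy of $Z$, so $I(M;Z\mid W)\le H(Z)\le cN$ regardless of what $W$ is, and no extra rate leaks through the secrets. The remaining ingredients — that a constant injection on $A$ behaves exactly like deleting $A$, and that a cut-set of size $c$ carries at most $cN$ bits over the block — are immediate from the model, so I expect the shared-secret bookkeeping (and, more mundanely, keeping the Fano constants and the $1/(1-P_e(A))$ normalization straight) to be the only places requiring attention.
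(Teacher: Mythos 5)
Your proposal is correct and follows essentially the same route as the paper's proof: fix the zero-injection on $A$, take the minimum cut of the residual graph ${\cal G}_A$, and combine the data-processing inequality across that cut with Fano's inequality. The only difference is that you justify the data-processing step by explicitly conditioning on the (message-independent) downstream secrets and randomness, whereas the paper simply asserts the Markov chain $M\rightarrow(W_1,\ldots,W_r)\rightarrow\hat M$; your version is a slightly more careful rendering of the same argument.
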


\begin{theorem}\label{thm:converse2}
For any code with $s=0$ (i.e. no shared secrets) and $r>0$, if there exist two adversary sets $A_1$ and $A_2$ such that $A_1\cup A_2$ covers a cut between Alice and any destination node $t_k$, then $P_e(A_1)+P_e(A_2)\ge 1$. In particular, it is impossible for both $P_e(A_1)$ and $P_e(A_2)$ to be arbitrarily small.
\end{theorem}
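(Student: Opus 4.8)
The plan is to carry out the ``symmetrization'' argument already sketched for the link‑based model in Section~\ref{sec:example}, now routed through a general cut. First, since $s=0$ the only randomness in the system is the message itself, so every link encoder and the decoder may be taken to be deterministic; hence, once $M$ is fixed, every packet in the network, and in particular each destination's estimate, is a deterministic function of $M$ and of the symbols injected by the adversary.

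Fix a cut $B$ (with $v_0\in B$, $t_k\notin B$) whose cut‑set $C:=\textup{cut-set}(B;{\cal G})$ satisfies $C\subseteq A_1\cup A_2$, and split $C=E_1\sqcup E_2$ by placing $e\in C$ into $E_1$ if $e\in A_1$ and into $E_2$ otherwise, so that $E_1\subseteq A_1$ and $E_2\subseteq A_2$. Let $\pi$ be a fixed‑point‑free permutation of the message set $F^{nr}$; such a $\pi$ exists because $r>0$ forces $q^{nr}\ge 2$. Now I would compare two runs of the network: (I) the true message is $M$, the adversary controls $A_1$, acts honestly on $A_1\setminus E_1$, and on each edge of $E_1$ writes the symbol that edge would carry in an \emph{honest} run of the message $\pi(M)$; and (II) the true message is $\pi(M)$, the adversary controls $A_2$, acts honestly on $A_2\setminus E_2$, and on each edge of $E_2$ writes the symbol that edge would carry in an honest run of the message $M$. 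Because $C$ separates $v_0$ from $t_k$ and every node strictly downstream of $C$ behaves honestly in both runs, the packets received at $t_k$ are one fixed deterministic function of the packets carried on $C$; and in \emph{both} runs those packets are the honest‑$\pi(M)$ values on $E_1$ and the honest‑$M$ values on $E_2$. Hence $t_k$ produces the identical estimate in the two runs, while the true message differs ($M$ in run~(I) versus $\pi(M)\ne M$ in run~(II)), so at least one of the two runs is a decoding failure.

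It remains to pass to probabilities. Write $\hat M(m)$ for the estimate at $t_k$ in run~(I) when its message parameter is $m$; by the previous paragraph, run~(II) with true message $m'$ produces the same $t_k$‑view as run~(I) with parameter $\pi^{-1}(m')$, hence estimate $\hat M(\pi^{-1}(m'))$. The injections of run~(I) touch only edges of $A_1$ and are computable by the omniscient adversary, so they form a valid attack for $A_1$, giving $P_e(A_1)\ge\Pr_M[\hat M(M)\ne M]$ for uniform $M$. Likewise run~(II) is a valid attack for $A_2$, giving $P_e(A_2)\ge\Pr_{M'}[\hat M(\pi^{-1}(M'))\ne M']$, and substituting $M=\pi^{-1}(M')$ (a bijection, so $M$ is uniform iff $M'$ is) this equals $\Pr_M[\hat M(M)\ne\pi(M)]$. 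Adding the two bounds and applying a union bound, $P_e(A_1)+P_e(A_2)\ge\Pr_M[\hat M(M)\ne M\text{ or }\hat M(M)\ne\pi(M)]=1$, since $\hat M(M)$ cannot equal both $M$ and $\pi(M)$. The ``in particular'' clause follows at once.

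The step I expect to be the real obstacle is the indistinguishability assertion in run~(I)/(II) for an arbitrary (acyclic) network: one must ensure that overwriting the packets on $E_1$ does not perturb those on $E_2$, and conversely. When the cut can be chosen with no edge directed from ${\cal V}\setminus B$ back into $B$ — as in the example of Section~\ref{sec:example}, where the cut edges feed $t$ directly — this is automatic. In general I would establish it by fixing a topological order on the edges of ${\cal G}$ and defining runs~(I) and~(II) \emph{simultaneously} along that order: when the order reaches an edge of $E_1$ (resp.\ $E_2$) the adversary copies the value that edge already has in the other run, which is well defined because, by acyclicity, that value depends only on edges earlier in the order. This joint construction is consistent and realizes exactly the two scenarios described above, which closes the gap.
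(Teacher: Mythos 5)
Your proof is correct and takes essentially the same symmetrization approach as the paper's own argument, which simply exhibits two messages $w_1,w_2$, has Calvin swap the corresponding packets on $A_1$ (resp.\ $A_2$), and concludes the decoder cannot distinguish the two events. Your version is in fact more careful on the two points the paper glosses over: the fixed-point-free permutation $\pi$ makes the bound $P_e(A_1)+P_e(A_2)\ge 1$ go through for the error probability over \emph{all} messages rather than just a chosen pair, and the joint topological-order coupling correctly handles the possibility that edges of $E_2$ lie downstream of edges of $E_1$ (so that overwriting one part of the cut could otherwise perturb the other).
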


\section{Coding with shared secrets}
\label{sec:codeconstruction}
In this section, we state our scheme for the general case: Alice wishes to multicast her message to $K$ receivers, and any $z$ intermediate nodes can any controlled by the adversary. Let ${\cal A}$ be the collection of sets of outgoing edges of any $z$ nodes in ${\cal V}\backslash \lbrace v_0,t_1,\ldots,t_K\rbrace$. In the following, $r:= \min_{A\in{\cal A}}\min_{t_k\in\mathcal{D}}\textup{min-cut}(v_0,t_k;{\cal G}_A)$. 

\begin{mydef} A single-variable linearized polynomial (SLP) $L(x)$ over finite field $F$ is of the form $ L(x): = \sum_{i = 1}^{n}a_i x^{p^i}$, where $a_i\in F$ for any $1\leq i\leq n$ and integer $n$.
\end{mydef}
 %\footnote{The SLP $L(x)$ has the following linearity properties: for $a,b\in F$, $L(a+b) = L(a) + L(b)$; for any $\alpha\in F$ and $a\in F$, $L(\alpha a) = \alpha L(a)$}.

\begin{mydef}[SLP hash function]
Let $k$ be a positive integer. For  a vector ${\bf x} \in F^{k}$ and $s_1,s_2\in F$, define the SLP hash function $\psi: F^{k} \times F^2 \rightarrow  F$ by $ \psi({\bf x},s_1,s_2) := s_{2} - \sum_{l = 1}^{k} x_{l}s_{1}^{p^l}$. 
\end{mydef}

The code consists of a \emph{source encoder}, \emph{intermediate node encoders} and a \emph{destination decoders}.

\textbf{Source encoder:} Algorithm~\ref{sourceencoder} describes the encoding process, which consists of two steps. Firstly, each message $M\in [q^{nr}]$ is encoded into a matrix $X$ as described in Section~\ref{sec:example}. Denote $X_i$ the $i$-th row of $X$ and $x_{il}$ the $(i,l)$-th entry of $X$. Secondly, the hash header $h\in F^{\delta}$ is computed and appended to all $X_i$'s. Notice here that $\delta$ (specified later) is negligible as $N$ tends to infinity. Figure \ref{packet} shows the structure of the $i$-th packet. 

\begin{figure}[htbp!]
\centering
\includegraphics[scale = 0.4]{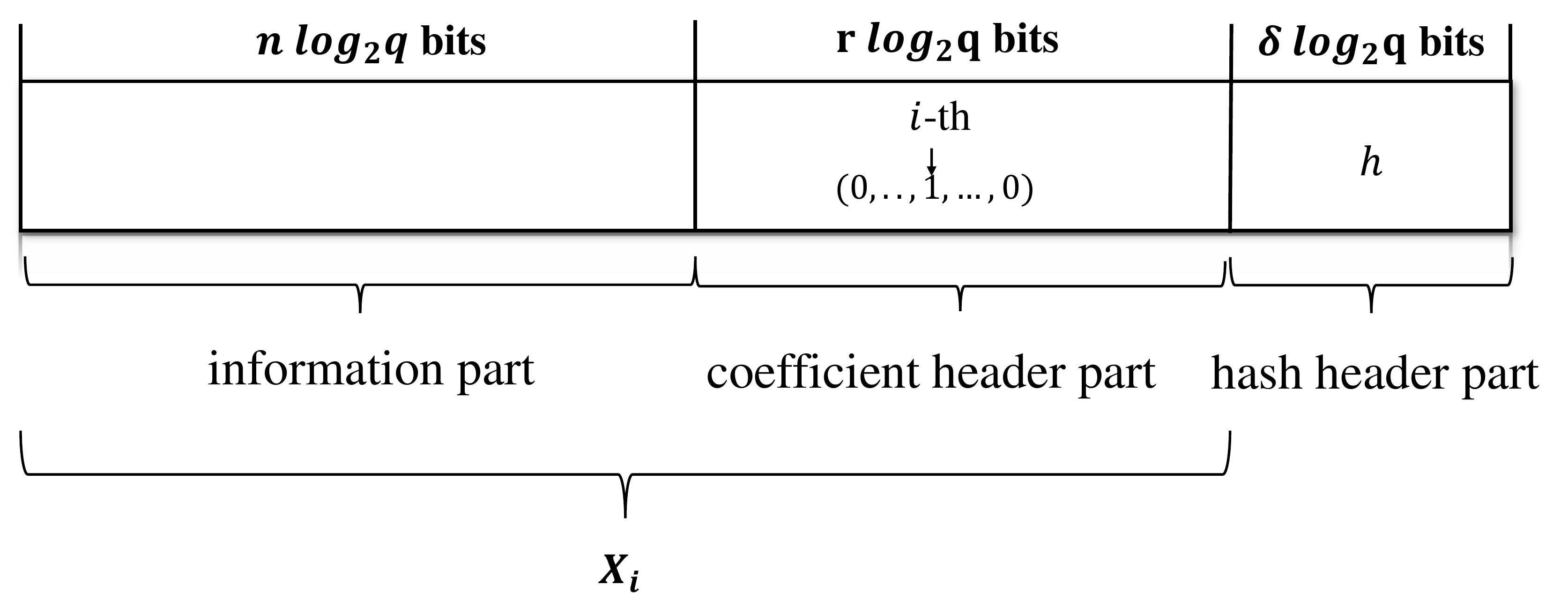}
\caption{The $i$-th packet $(X_i,h)$ and size of each part. The total length of $(X_i,h)$ is $n+r+\delta$ many finite field elements, or equivalently $(n+r+\delta)\log_2 q$ bits, where $\delta $ is independent of $n$.
\label{packet}}
\end{figure}

The header $h$ consists of $|{\cal V}|-1$ parts, with each part being a vector $h_v\in F^{r^2}$ and corresponding to a non-source node $v$. Denote entries of $h_v$ by $h_{v,i,j}$ with $1\leq i,j\leq r$. Let $\mathbb{S}_v\in F^{r^2+r}$ be the shared secrets between node $v$ and the source. These dimensions are determined by the particular hash function based on SLP's, as described below. Rewriting $\mathbb{S}_v := \lbrace s_{v,j}, d_{v,i,j}\rbrace _{1\leq i,j\leq r}$, we then compute $h_v$ by 

\[
h_{v,i,j} := \psi(X_i, s_{v,j},d_{v,i,j}) 
		  = d_{v,i,j} - \sum_{l = 1}^{n+r}x_{il}s_{v,j}^{p^l}.  
\]

A more concise form to describe the computation of $h_v$ is $ [h_v] = [d_v] - X\times [s_v]$, where $[h_v]$ denotes the matrix with entries being $h_{v,i,j}$, $[d_{v}]$ the matrix with entries being $d_{v,i,j}$ and $[s_{v}]$ the matrix with $(i,j)$-th entry being $s_{v,j}^{p^i}$, $1\leq i\leq n+r$ and $1\leq j\leq r$, i.e.,

\[
[s_{v}]: = \begin{pmatrix}
s_{v,1}^p     & s_{v,2}^p     & \dots  & s_{v,r}^p  \\
s_{v,1}^{p^2} & s_{v,2}^{p^2} & \dots  & s_{v,r}^{p^2} \\
\vdots          & \vdots          & \ddots &\vdots \\
s_{v,1}^{p^{(n+r)}} & s_{v,2}^{p^{(n+r)}} & \dots  & s_{v,r}^{p^{(n+r)}} 
\end{pmatrix}.
\]  

The hash header $h$ consists of $h_v$'s and is a row vector of length $\delta $. Each message $M$ is encoded into $r$ \emph{original packets}: $(X_i,h), 1\leq i\leq r $. Random linear combinations $(\sum_{i=1}^{r}a_iX_i,h)$ are sent on links in ${\cal E}_{\text{out}}(v_0)$, where $a_i\in F $.

\begin{algorithm}
\caption{Source encoder}\label{sourceencoder}
\begin{algorithmic}[1]
\Procedure{Source Encoder}{$M, \mathbb{S}_v, \forall v\in {\cal V}\backslash \lbrace v_0,t\rbrace$}\\
Rewrite $M$ as $X$; 
\ForAll{$v\in {\cal V}\backslash \lbrace v_0,t\rbrace$} 
\State Compute $[s_v]$;
\State Compute $[h_v] := [d_v] - X\times [s_v]$; 
\EndFor
\State Obtain $h$ by rewriting all $h_v$'s in a row vector; 
\ForAll {$e = (v_0,v)\in\mathcal{E}_{\text{out}}(v_0)$} 
\State  Create and send $(\sum_{i = 1}^{r} a_{e,i}X_i, h)$ on $e$; where $a_{e,i}$'s are generated uniformly at random from $F$.
\EndFor
\EndProcedure
\end{algorithmic}
\end{algorithm}

\emph{Encoding complexity analysis:} First of all, notice that multiplication in $\mathbb{F}_{p^m}$ can be done in ${\cal O}(m\log m)$ time, see \cite{Gathen:2003:MCA:945759}. For each node $v$, 
\begin{itemize}
\item Computation of $[s_v]$ costs ${\cal O}(rnm \log m))$ time: For each $a\in\mathbb{F}_{p^m}$, computation of $a^p$ costs $\log p$ many multiplications in $\mathbb{F}_{p^m}$, which in total is $\mathcal{O}((\log p)m\log m )$ (which again is $\mathcal{O}(m\log m )$ since $p$ is a fixed constant). Therefore, computation of $[s_v]$ costs $\mathcal{O}(r(n+r)m\log m)$ time.

\item Computation of $[h_v]$ costs ${\cal O}(r^2nm \log m))$ time: The multiplication $X\times [s_v]$ costs most, which is in ${\cal O}(r^2(n+r)m\log m)$ time.
\end{itemize}

In total, the encoding complexity is ${\cal O}(|{\cal V}|r^2nm \log m)$. Since our scheme requires $m = \Theta(n)$ to ensure small error probability (see Appendix~\ref{app:lemma}), equivalently, the encoding complexity is $\mathcal{O}(|\mathcal{V}|r^2N\log N)$, where $N$ is the number of bits in each packet.

\textbf{Verify-and-encode at intermediate nodes:} Each intermediate node $u$ performs a  ``verify-and-encode" procedure. 

For each packet $(W(e),h(e))$\footnote{Here we use $h(e)$ to denote the hash header received on link $e$ to indicate that the adversary may also corrupt the hash header $h$.} received from edge $e\in\mathcal{E}_{\text{in}}(u)$, node $u$ first \emph{verifies} whether $W(e)$ is a correct linear combination of $X_1,\ldots, X_r$, based on the hash $h(e)$ and shared secrets $\mathbb{S}_u $. Then node $u$ forwards a random linear combination of all ``valid" packets on links in $\mathcal{E}_{\text{out}}(u)$. We now describe the verification process (Algorithm~\ref{internalencoder}).  

Denote $W(e) = (w_1,\ldots,w_n, w_{n+1},\ldots,w_{n+r})$, then we know that $W(e)$ should be the linear combination $\sum_{i=1}^{r}w_{n+i}X_i$ if $W(e)$ is not corrupted by the adversary. Node $u$ computes $Q_1,Q_2$ defined below. Then $W(e)$ is regarded as valid if $Q_1=Q_2$ and invalid otherwise.

\begin{align*}
Q_1 &:= \sum_{l=1}^{n+r}w_{l}\Big(\sum_{i' = 1}^{r}\mathbbm{1}_{i'}\times s_{u,i'}\Big)^{p^l}\\
Q_2 &:=  \sum_{i=1}^{r}w_{n+i}\sum_{i' = 1}^{r}\mathbbm{1}_{i'}\times (d_{u,i,i'} - h(e)_{u,i,i'})
\end{align*}

Here $\mathbbm{1}_{i'}$ is the indicator function of event $\lbrace w_{n+i'}\neq 0\rbrace$. In Section~\ref{sec:example}, we have described the verification procedure in details for the specific packet $W = X_1 + 2X_2$. This scheme works w.h.p. over the randomness in the shared secrets. The following lemma states this formally.

\begin{lemma}\label{lemma1} When $W(e)\neq\sum_{i=1}^{r}w_{n+i}X_i$, then $Q_1\neq Q_2$ with probability at least $1-1/p^{\Theta(n)}$ when the field size is $q = p^{\Theta(n)}$. When $W(e)=\sum_{i=1}^{r}w_{n+i}X_i$ and $[h'_u] = [h_u]$, then $Q_1 = Q_2$ always holds.
\end{lemma}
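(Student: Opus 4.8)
The plan is to reduce both assertions to two elementary facts about single-variable linearized polynomials (SLPs) over $F=\mathbb{F}_{p^m}$, writing $L_{\mathbf a}(y):=\sum_{l=1}^{n+r}a_l y^{p^l}$ for the SLP with coefficient vector $\mathbf a=(a_1,\dots,a_{n+r})$. First, \emph{Frobenius additivity}: since each map $y\mapsto y^{p^l}$ is additive in characteristic $p$, one has $L_{\mathbf a}(y_1+\cdots+y_t)=\sum_k L_{\mathbf a}(y_k)$, and $L$ is obviously $F$-linear in $\mathbf a$. Second, a \emph{root bound}: a nonzero SLP whose highest nonzero coefficient is $a_d$ has at most $p^{d}$ roots in $\overline F$ (its root set is an $\mathbb{F}_p$-subspace of dimension $\le d$; equivalently its ordinary degree is $p^d$), so $L_{\mathbf a}(y)=c$ has at most $p^{d}$ solutions $y\in F$ for any $c$. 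I read the lemma as a statement for \emph{fixed} $X$ and fixed (possibly adversarial) received pair $(W(e),h(e))$, with the probability taken over the secret $\mathbb{S}_u=\{s_{u,j},d_{u,i,j}\}_{i,j}$; this is how it is invoked later, and it is legitimate because, by the adversary model, honest node $u$'s secret is never revealed, so the injected packet is independent of $\mathbb{S}_u$.

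\emph{The key identity.} Put $S:=\{i'\in[r]:w_{n+i'}\ne 0\}$, $\sigma:=\sum_{i'\in S}s_{u,i'}$, and $\Delta:=W(e)-\sum_{i=1}^{r}w_{n+i}X_i$ (so $\Delta$ is supported on the first $n$ coordinates, since the coefficient block of $X$ is the identity). By Frobenius additivity $Q_1=\sum_{l=1}^{n+r}w_l\sigma^{p^l}=\sum_{i'\in S}L_{W(e)}(s_{u,i'})$; using the source's relation $h_{u,i,i'}=d_{u,i,i'}-L_{X_i}(s_{u,i'})$ to re-express the $d_{u,i,i'}$ in $Q_2$, together with additivity and $F$-linearity in the coefficient vector, a short computation gives
\[
Q_1-Q_2 \;=\; L_{\Delta}(\sigma)+\sum_{i=1}^{r}w_{n+i}\sum_{i'\in S}\big(h(e)_{u,i,i'}-h_{u,i,i'}\big)
\;=\; L_{W(e)}(\sigma)+c_0,
\]
where $c_0:=\sum_{i=1}^{r}w_{n+i}\sum_{i'\in S}\big(h(e)_{u,i,i'}-d_{u,i,i'}\big)$ depends only on the fixed data and on $d_u$, hence is independent of $\sigma$. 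The first form settles completeness at once: if $W(e)=\sum_i w_{n+i}X_i$ and $[h'_u]=[h_u]$, then $L_\Delta\equiv 0$ and every summand $h(e)_{u,i,i'}-h_{u,i,i'}$ vanishes, so $Q_1-Q_2=0$ identically---no randomness needed, which is why the lemma asserts this ``always holds''.

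\emph{Soundness.} Suppose $W(e)\ne\sum_i w_{n+i}X_i$, and (the nondegenerate case) that the received coefficient header is not identically zero, so $S\ne\emptyset$; a packet with all-zero coefficient header carries no coding information and is simply discarded, so it need not be covered. Then $L_{W(e)}\ne 0$: were $L_{W(e)}\equiv 0$ we would have $W(e)=0$, forcing all $w_{n+i}=0$ and hence $\sum_i w_{n+i}X_i=0=W(e)$, contradicting the hypothesis. Its $p$-degree is at most $n+r$. Conditioning on $d_u$ and on everything but $\{s_{u,j}\}_j$, $c_0$ is a constant while $\sigma$ is uniform on $F$ (fix $i^{*}\in S$ and condition on the remaining $s_{u,i'}$; then $\sigma=s_{u,i^{*}}+\text{const}$ is uniform). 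By the root bound, $\Pr[Q_1=Q_2\mid d_u,\dots]=\Pr[L_{W(e)}(\sigma)=-c_0]\le p^{\,n+r}/q$, and averaging over $d_u$ preserves this. Taking $q=p^{m}$ with $m=\Theta(n)$ and a large enough constant (say $m\ge 2n$, so $p^{\,n+r}/q\le p^{\,r-n}=p^{-\Theta(n)}$ since $r$ is a constant) yields $\Pr[Q_1\ne Q_2]\ge 1-1/p^{\Theta(n)}$.

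There is no serious obstacle here; the work lies in the bookkeeping of ``the key identity''---carrying out the Frobenius manipulations cleanly and tracking which quantities are fixed versus random---and in making explicit the independence of $(W(e),h(e))$ from $\mathbb{S}_u$, which is precisely where the modelling assumption that the adversary does not know honest nodes' secrets enters. The only other point needing care is the degenerate all-zero-coefficient-header packet, for which $Q_1=Q_2=0$ regardless; this must be excluded by convention (or argued harmless downstream) rather than by the lemma itself.
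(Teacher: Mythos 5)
Your proposal follows essentially the same route as the paper's proof: your ``key identity'' in its first form, $Q_1-Q_2=L_\Delta(\sigma)+\sum_i w_{n+i}\sum_{i'\in S}\bigl(h(e)_{u,i,i'}-h_{u,i,i'}\bigr)$, is exactly the paper's rearrangement of $Q_1$ via Frobenius additivity and the source relation $h_{u,i,j}=d_{u,i,j}-L_{X_i}(s_{u,j})$, and your root bound for linearized polynomials plays the role of the paper's Schwartz--Zippel step (the paper gets $p^{n}/q$ from the degree of $L_\Delta$, you get $p^{n+r}/q$; both are $p^{-\Theta(n)}$). You also correctly flag the all-zero coefficient-header packet, for which $Q_1=Q_2=0$ identically and the first claim of the lemma is literally false; the paper's proof silently assumes $S\neq\emptyset$ when it asserts the polynomial is nonzero.

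There is, however, one genuine flaw in your soundness step. You justify treating $(W(e),h(e))$ as fixed and independent of $\mathbb{S}_u$ by saying the adversary never learns $u$'s secret. But the hash $[h_u]=[d_u]-X\times[s_u]$ is broadcast in the clear in every packet header, and the omniscient adversary may choose its injection as a function of $h_u$. Hence the injection is independent of $s_u$ (one-time-pad: $h_u$ is independent of $s_u$) but \emph{not} of $d_u$, and a fortiori not of $\mathbb{S}_u$. Your decomposition $Q_1-Q_2=L_{W(e)}(\sigma)+c_0$ with $c_0$ ``constant given $d_u$'' therefore conditions jointly on $d_u$ and on an injection that is correlated with $d_u$ through $h_u$; under that joint conditioning $s_u$ is no longer uniform (conditioning on both $d_u$ and $h_u$ pins down $X\times[s_u]$), so the bound $\Pr[L_{W(e)}(\sigma)=-c_0]\le p^{n+r}/q$ is not justified as written. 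The correct conditioning is on the adversary's view (which contains $h_u$ but not $d_u$), under which $s_u$ stays uniform, the hash discrepancy $h(e)_u-h_u$ is the constant, and the random part is $L_\Delta(\sigma)$ --- i.e., your first form, not your second. Since you already derived the first form, the fix is immediate and the stated bound survives; but the independence claim and the resulting probability calculation, as written, are not valid against the paper's adversary.
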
 

\begin{proof}
See Appendix~\ref{app:lemma}. As a remark, the relation between the block length $N$ (in bits) and $m,n$ is $N = \Theta(nm)$. Therefore, the general relations are: $m = \Theta(\sqrt{N})$, $ n = \Theta(\sqrt{N})$, $q = p^m$, and  $ m = cn $ for some constant $c>1$. To be concrete, we can think of the following parameter settings: $n = \sqrt{N/2}$, $m = 2n = \sqrt{2N}$ and $q = 2^{\sqrt{2N}}$. With this setting, the probability that a corrupted packet is not detected with probability at most $2^{-n} = 2^{-\sqrt{N/2}}$. 
\end{proof}

\begin{algorithm}
\caption{Internal encoder}\label{internalencoder}
\begin{algorithmic}[1]
\Procedure{Verifier}{$ \mathbb{S}_u$} 
\ForAll{$e\in {\cal E}_{\text{in}}(u)$} 
\State Compute $Q_1,Q_2$;
\State if $Q_1=Q_2$, then label $W(e)$ as valid;
\State else, label $W(e)$ as invalid.
\EndFor
\EndProcedure
\Procedure{ Encoder}{all valid $W(e)$, $e\in\mathcal{E}_{\text{in}}(u)$} 
\ForAll {$e\in\mathcal{E}_{\text{out}}(u)$}
\State Create and send $(\sum_{e'\in\mathcal{E}_{\text{in}}(u)} a_{e'}W(e'), h)$ on $e$; where $a_{e'} = 0$ if $W(e')$ is invalid; $a_{e'}$ is generated uniformly at random from $F$ if $W(e')$ is valid.
\EndFor
\EndProcedure
\end{algorithmic}
\end{algorithm} 

\emph{Complexity analysis:} For each intermediate node $u$, 
\begin{itemize}
\item Computation of $Q_1$ costs ${\cal O}(d_{\text{in}}nm \log m))$ time: For each incoming link of $u$, computation of $Q_1$ for this link costs at most $2(n+r)$ many multiplications. Thus, the complexity for computations of $Q_1$'s on all incoming links cost ${\cal O}(d_{\text{in}}(n+r)m\log m)$ time, where $d_{\text{in}}$ denotes the maximum in-degree in the graph.   

\item Computation of $Q_2$ costs ${\cal O}(d_{\text{in}}rm \log m))$ time: For each link, there are at most $2r$ additions and $r$ multiplications in $F$ needed for the computation of $Q_2$.  
\end{itemize}

In total, the complexity of each intermediate node is ${\cal O}(d_{\text{in}}nm\log m)$, or equivalently, ${\cal O}(d_{\text{in}}N\log N)$ with $N$ be the number of bits in each packet. 

\textbf{Decoder:} Each destination node verifies received packets, and decodes using all valid ones (see Algorithm~\ref{decoder}). The \emph{decoding complexity} for each decoder is $\mathcal{O}((d_{\text{in}}+r^2)nm\log m)$, or equivalently, 
$\mathcal{O}((d_{\text{in}}+r^2)N\log N)$ as analysed in the following. 

\begin{algorithm}
\caption{Destination decoder}\label{decoder}
\begin{algorithmic}[1]
\Procedure{Verifier}{$\mathbb{S}_t$} 
\ForAll{$e\in {\cal E}_{\text{in}}(t)$} 
\State Compute $Q_1,Q_2$;
\State if $Q_1=Q_2$, then label $W(e)$ as valid;
\State else, label $W(e)$ as invalid.
\EndFor
\State Rearrange all valid packets as $Y$.
\EndProcedure
\Procedure{Decoder}{all valid $W(e)$, $e\in\mathcal{E}_{\text{in}}(t)$}
\State Solve $Y = TX$ with unknowns being $X$ ;
\EndProcedure
\end{algorithmic}
\end{algorithm}

\emph{Decoding complexity analysis: } For each destination node decoder,

\begin{itemize}
\item Computations of $Q_1, Q_2$ cost ${\cal O}(d_{\text{in}}nm\log m)$ time, the same as each intermediate node;

\item Decoding is done by solving a system of linear equations $Y = TX$. Notice here that the matrix $T$ is different for each decoder in general. Dimensions of these three matrices are $X\in F^{r\times (n+r)}$, $T\in F^{r\times r}$ and $Y\in F^{r\times (n+r)}$. The complexity of Gaussian elimination is ${\cal O}(r^2nm\log m)$.   
\end{itemize} 

Therefore, the total complexity for each decoder is ${\cal O}((d_{\text{in}} + r^2)nm\log m)$, or equivalently ${\cal O}((d_{\text{in}} + r^2)N\log N)$.

\section{Discussion and Conclusion}
\label{sec:conclusion}

We develop novel computationally-efficient network codes using shared secrets for node-based adversary problems. Our codes meet a natural erasure outer bound. 

Although the description of our codes is specialized for the case of node-based adversaries in multicast settings,  our techniques also extend to more general cases. We briefly describe some of them in the following.

%\item \emph{Multiple nodes adversary: }The code remains exactly the same. It is worthwhile to notice that in this case, not all corrupted packets are detectable and honest nodes can be isolated by adversarial nodes, as shown in Figure~\ref{fig:multiple}. However, effectively these cases will not change the minimum cut and main results in \ref{sec:mainresults} still hold.  

%\item \emph{Multicast: } The only change needed is the encoding process: headers corresponding to each receiver should be added. The operations of internal nodes and decoders remain the same. The field size needed may increase as the number of receivers increases. Also note here that linear code is sufficient to achieve capacity for multicast.
\begin{itemize}
\item\emph{General adversarial sets} \label{subsec:gas}

Let $\cal A$ be collection of all possible subsets of links that the adversary can control. The set $\cal A$ is given and known to both the transmitter and receivers, however, the specific set $A\in\mathcal{A}$ of attacked links is a priori known only to the adversary. Since our code is essentially designed to detect corrupted links, it works even for this general adversarial set $\cal A$. Note that link-based (respectively, node-based) adversaries correspond to $\mathcal{A}$ being the collection of all subsets of $z$ links (respectively, collection of sets of outgoing links from all subsets of $z$ nodes). In this setting, the code has similar guarantees as Theorem~\ref{thm:no_feedback}. The proof also follows on similar lines and is skipped here.
 
\begin{corollary}
For any $\epsilon>0$, there exists a code satisfying
\begin{itemize}
\item $R\ge \min_{A\in{\cal A}}\min_{t_k\in\cal D}\textup{min-cut}(v_0,t_k;{\cal G}_A)-\epsilon$,
\item $P_e(A)\le \epsilon$ for all $A\in{\cal A}$,
\item $s\le \epsilon N$,
\item $N = \Theta(1/\epsilon^2)$,
\item $\mathcal{T}$ is at most polynomial in $N$ (or equivalently, in $m,n$) and network parameters $|{\cal V}|$ and $d_{\text{in}}$, as shown in Table~\ref{table:cc}. 
\end{itemize}
\end{corollary}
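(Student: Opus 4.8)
The plan is to reuse, essentially verbatim, the code of Section~\ref{sec:codeconstruction}: the source appends an SLP hash slot $h_v\in F^{r^2}$ for \emph{every} non-source node $v$, each intermediate node runs the verify-and-encode rule of Algorithm~\ref{internalencoder}, and each destination runs the verify-then-solve decoder of Algorithm~\ref{decoder}. The only change is that $r$ is now taken to be $r:=\min_{A\in\mathcal{A}}\min_{t_k\in\mathcal{D}}\textup{min-cut}(v_0,t_k;\mathcal{G}_A)$ for the given collection $\mathcal{A}$. The key observation is that neither the construction nor Lemma~\ref{lemma1} ever exploits the node-based structure of $\mathcal{A}$: the verification that a node $u$ performs on an incoming edge $e$ depends only on $\mathbb{S}_u$ and on the packet $(W(e),h(e))$ received on $e$, and Lemma~\ref{lemma1} is a statement about the SLP hash alone. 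Hence the entire argument carries over once ``the honest node just downstream of a corrupted edge'' is reinterpreted as ``the head of an edge in $A$ that the adversary has not compromised'', under the analogous assumption that Calvin never learns the secrets of the nodes he does not control.

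The error analysis reduces the adversary to \emph{erasures}, exactly as for Theorem~\ref{thm:no_feedback}. First I would fix $A\in\mathcal{A}$ and an arbitrary injection on the edges of $A$, and let $\mathcal{E}_{\text{good}}$ be the event that (i) no honest node or destination ever labels as ``valid'' a packet $(W(e),h(e))$ with $W(e)\ne\sum_i w_{n+i}X_i$, and (ii) the random linear network code, run over the residual graph $\mathcal{G}_A$, has full min-cut rank to every destination. On $\mathcal{E}_{\text{good}}$ every valid packet anywhere in the network is a genuine linear combination of $X_1,\dots,X_r$ whose coefficients are readable from its header, so at each destination $Y=TX$ holds exactly and the rows contributed by paths inside $\mathcal{G}_A$ already give $\mathrm{rank}(T)\ge\textup{min-cut}(v_0,t_k;\mathcal{G}_A)\ge r$ by~\cite{ho2006random}; hence $X$, and therefore $M$, is recovered at every destination, giving $P_e(A)\le\Pr[\overline{\mathcal{E}_{\text{good}}}]$. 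A union bound then finishes the job: (i) fails for a given verification with probability at most $p^{-\Theta(n)}$ by Lemma~\ref{lemma1}, and there are at most $|\mathcal{V}|\,d_{\text{in}}$ verifications; (ii) fails with probability $O(|\mathcal{E}|/q)$ by~\cite{ho2006random}. Taking $n=\Theta(\sqrt{N})$, $m=\Theta(n)$, $q=p^m$ as in Lemma~\ref{lemma1} makes $\Pr[\overline{\mathcal{E}_{\text{good}}}]\le\epsilon$ uniformly over $A\in\mathcal{A}$.

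The remaining code metrics are literally those of Theorem~\ref{thm:no_feedback}: the header length $\delta=(|\mathcal{V}|-1)r^2$ and the per-node shared-secret dimension $s=r^2+r$ are constants independent of $n$, so with $N=\Theta(nm)=\Theta(1/\epsilon^2)$ bits per packet one gets $R=(nr\log q)/N\ge r-\epsilon$ and $s\le\epsilon N$, and the complexity bounds in Table~\ref{table:cc} were derived without any reference to the structure of $\mathcal{A}$ and so are unchanged. The one point that needs care — and the main obstacle, inherited verbatim from the proof of Theorem~\ref{thm:no_feedback} rather than created by the generalization — is that Calvin is omniscient and causal, so the packet $(W(e),h(e))$ delivered to an honest node $u$ can depend on everything Calvin has observed so far; the hard part is to argue that $\mathbb{S}_u$ (in particular the scalars $s_{u,j}$ appearing in $Q_1$) remains independent of $(W(e),h(e))$ conditioned on Calvin's view, so that Lemma~\ref{lemma1} may be invoked \emph{conditionally} at each verification and the union bound above is legitimate. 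This holds precisely because Calvin never observes the secrets of the nodes he does not control, and the remaining details are exactly as in the proof of Theorem~\ref{thm:no_feedback}.
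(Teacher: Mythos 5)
Your proposal is correct and takes essentially the same route as the paper, which explicitly skips the proof of this corollary on the grounds that it ``follows on similar lines'' as the proof of Theorem~\ref{thm:no_feedback}: you adapt that proof by redefining $r$ for the general collection $\mathcal{A}$, reducing the adversary to erasures on $\mathcal{G}_A$ via the SLP verification (Lemma~\ref{lemma1}) plus a union bound, and invoking the random-linear-network-coding rank guarantee on the residual graph. Your write-up is in fact somewhat more careful than the paper's (e.g., the explicit good-event decomposition and the remark on conditional independence of honest nodes' secrets from Calvin's view), but it is the same argument.
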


\item\emph{Unequal link capacities}

Networks with unequal link capacities can be handled in a manner similar to unit capacities networks with  general adversarial set. For integer (or rational) link capacities, Figure~\ref{fig:reduction} shows a simple example where such a reduction is done. The case of irrational link capacities can be approximated by rational capacities. Therefore, this is a special case of the general adversarial set.

\item\emph{Network with cycles} 

Although our codes here cannot be directly applied to network with cycles, it is possible to generalize our ideas by working with the corresponding \emph{acyclic} time-expanded network ({\em c.f.}~\cite[Chapter~20]{yeung2008information}).

\begin{figure}[!htbp]
\centering
\includegraphics[scale=0.5]{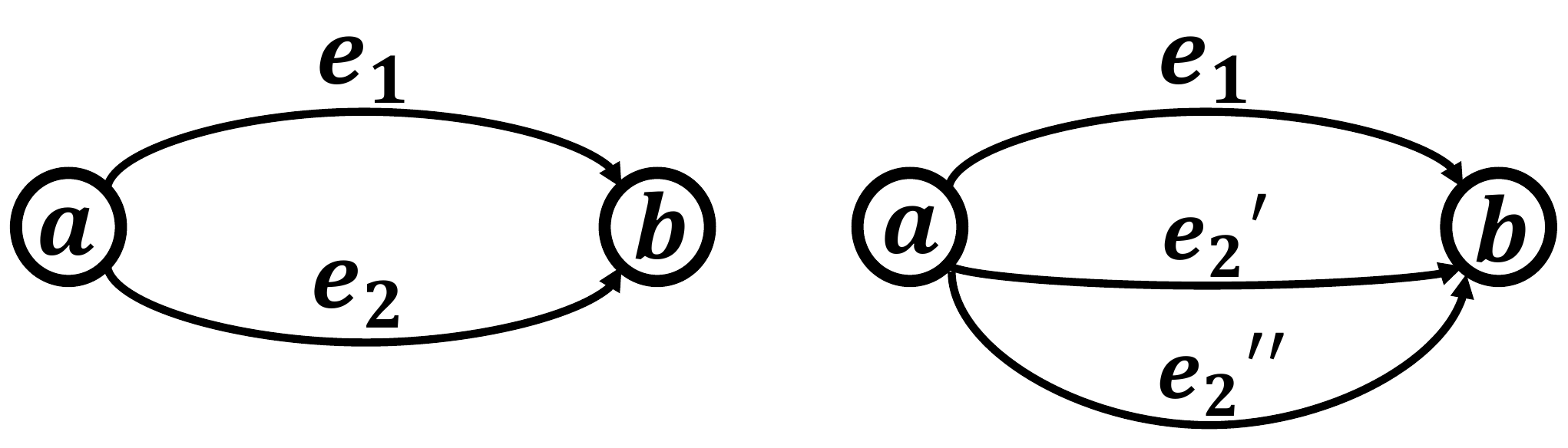}

\caption{Left: link $e_1$ has capacity 1 and link $e_2$ has capacity 2. The adversarial set ${\cal A} = \left\{\lbrace e_1 \rbrace, \lbrace e_2 \rbrace\right\}$. Right: link $e_1,e_2',e_2''$ all have capacity 1 and the adversarial set ${\cal A}' = \left\{\lbrace e_1 \rbrace, \lbrace e_2',e_2'' \rbrace\right\}$. }
\label{fig:reduction}%
\end{figure}
\end{itemize}

\appendices
\section{Proof of Lemma \ref{lemma1}}
\label{app:lemma}
\begin{proof}
Let $W'(e): = (w'_1,\ldots,w'_n,w_{n+1},\ldots,w_{n+r})$ $ = \sum_{i=1}^{r} w_{n+i}X_i $ be the correct linear combination claimed by header of $W(e)$. We bound the probability of the event $E_1 :=\lbrace  Q_1=Q_2 \rbrace$ conditioned on $E_2:=\lbrace W(e)\neq W'(e)\rbrace$. 

Rewrite $Q_1$ as the following
\begin{align*}
Q_1 &= \sum_{l = 1}^{n+r}(w'_l + (w_l - w'_l))\sum_{i' = 1}^{r}\mathbbm{1}_{i'}s_{u,i'}^{p^l}\\
    &= \sum_{i=1}^{r}w_{n+i}\sum_{i' = 1}^{r}\mathbbm{1}_{i'}\times (d_{u,i,i'} - h_{u,i,i'}) + \sum_{l = 1}^{n}(w_l - w'_l)\sum_{i' = 1}^{r}\mathbbm{1}_{i'}s_{u,i'}^{p^l}
\end{align*}

Then, $Q_1 = Q_2$ is equivalent to 
\[
\sum_{l = 1}^{n}(w_l - w'_l)\sum_{i' = 1}^{r}\mathbbm{1}_{i'}s_{u,i'}^{p^l} + \sum_{i=1}^{r}w_{n+i}\sum_{i' = 1}^{r}\mathbbm{1}_{i'}\times (h'_{u,i,i'} - h_{u,i,i'}) = 0, 
\] 
where the left hand side is a non-zero polynomial of degree at most $p^n$ in $s_{u,1},\ldots,s_{u,r}$. Since the shared secrets $s_{u,1},\ldots,s_{u,r}$ are uniformly generated from $\mathbb{F}_{p^m}$, by the Schwartz Zippel  Lemma (see~\cite{motwani2010randomized}), we have $P(E_1|E_2)\leq\frac{p^n}{p^m}$, which is exponentially small in $n$ when $m=cn$ for some constant $c>1$. The case of $W(e) = W'(e)$ follows from direct computation, which we omit here.
\end{proof}

\section{Proof of  Theorem~\ref{thm:no_feedback}}
\label{app:no_feedback}
\begin{proof}
Let $r = \min_{A\in{\cal A}}\min_{t_k\in\cal D}\textup{min-cut}(v_0,t_k;{\cal G}_A)$ and $\delta := (|{\cal V}|-1)r^2$. Denote the maximum node degree (sum of in-degree and out-degree) in $\cal{G}$ as $d_{\text{max}}$.  

Given any $\epsilon>0$, let $n = \sqrt{N/2}$, $m = 2n = \sqrt{2N}$ and $q = 2^{\sqrt{2N}}$, then take $N$ large enough such that $z\cdot d_{\text{max}}\cdot 2^{-\sqrt{N/2}} < \epsilon $ and $ \frac{nr}{n+r+\delta} > r-\epsilon $. Design network code described in Section~\ref{sec:codeconstruction} over $\mathbb{F}_{q}$. Denote the constructed network code by $\mathbb{C}(r,\epsilon)$. We analyse the code $\mathbb{C}(r,\epsilon)$ in the following:

\begin{itemize}
\item \emph{Rate:} $R = \frac{nr}{n+r + \delta} > r - \epsilon$ by choice of $n$;
\item \emph{Error probability:} For any $A\in\mathcal{A}$, the error probability is upper bounded by the probability that some corrupted packet is not detected by the very next downstream node, which, by union bound, is at most $z\cdot d_{\text{max}}\cdot 2^{-\sqrt{N/2}} < \epsilon $ by taking $N$ to be large. Therefore, $P_e(A)\leq\epsilon$ for any $A\in\mathcal{A}$.
\item \emph{Dimension of shared secrets:} Our scheme in Section~\ref{sec:codeconstruction} requires $s = r^2+r$ for each node, which is independent of $n$ (also independent of $N$).

\item \emph{Block length:} $N = \Theta(1/\epsilon^2)$ follows from the error probability requirement $z\cdot d_{\text{max}}\cdot 2^{-\sqrt{N/2}} < \epsilon $ and rate requirement $\frac{nr}{n+r+\delta} > r-\epsilon$. The former requires $N >  2\log^2(\frac{z\cdot d_{\text{max}}}{\epsilon})$ and the latter requires $N>\frac{|{\cal V}|^\alpha}{\epsilon^2}$ for some constant $\alpha$ (a crude estimation yields $\alpha = 7$). Since we are interested in the regime where $\epsilon$ tends to zero, we have $N = \Theta(1/\epsilon^2)$.

\item \emph{Complexity:} The complexity is given in our description of the code design in Section~\ref{sec:codeconstruction}. 
\end{itemize}

Note that not all corrupted packets are detectable and honest nodes can be isolated by adversarial nodes, as shown in Figure~\ref{fig:multiple}. However, these cases do not change the minimum cut and the theorem still holds.

\begin{figure}[!htbp]
\centering
\subfigure[]{
\includegraphics[scale = 0.25]{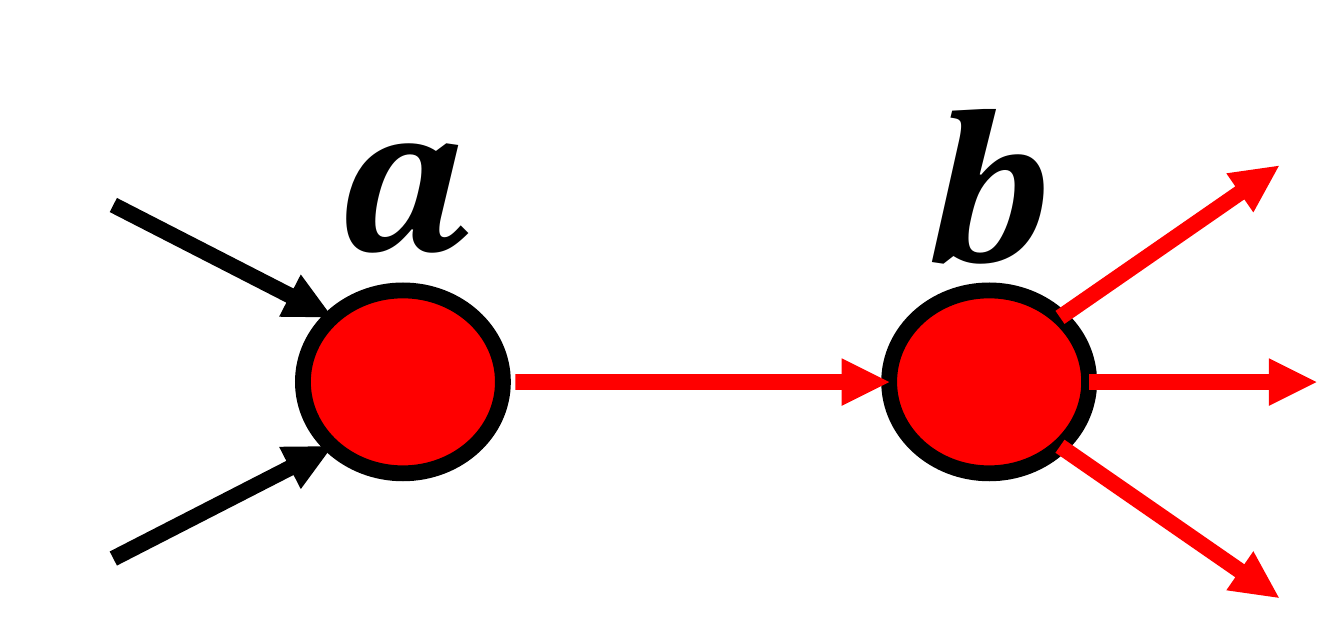}
}

\quad
\subfigure[]{
\includegraphics[scale = 0.35]{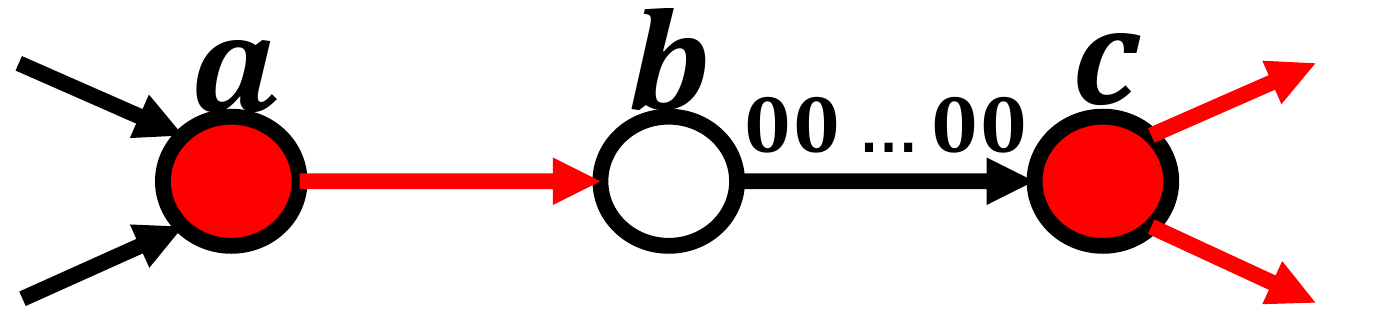}
}

\caption{(a) A simple example shows that corrupted packets may not be detectable. When nodes $a$ and $b$ are both controlled by the adversary, the corrupted packet from $a$ to $b$ is not detectable. (b) A simple example shows that honest node can be isolated by adversarial nodes. }
\label{fig:multiple}%
\end{figure}
\end{proof}

\section{Proof of Theorem~\ref{thm:converse1}}
\label{app:converse1}

\begin{proof}
Let $A\in\mathcal{A}$ and $r=\min_{t_k\in\cal D} \textup{min-cut}(v_0,t_k;{\cal G}_A)$. Suppose adversary sends zeros on links in $A$, and let $E\subset\mathcal{E}$ be the cut-set of a minimum cut between $v_0$ to $t_k$ (minimized over all $k$), then $|E|=r$. Denote $W_1,\ldots, W_{r}$ be the random variables induced by any codes on links in $E$, then we can get a Markov chain $M\rightarrow (W_1,\ldots,W_{r})\rightarrow\hat{M}$. The proof is completed by the following: 
\begin{align*}
NR &=  H(M)  =  H(M|\hat{M}) + I(M;\hat{M}) \\
   &\leq  H(M|\hat{M}) + I(M; W_1,\ldots,W_{r}) \\
   &\leq  H(P_e(A)) + P_e(A)NR + H(W_1,\ldots,W_{r}) \\
   &\leq  H(P_e(A)) + P_e(A)NR + Nr. 
\end{align*}
Therefore, $R\leq \frac{1}{1-P_e(A)}\left [r + \frac{H(P_e(A))}{N} \right ]$.
\end{proof}

\section{Proof of Theorem~\ref{thm:converse2}}
\label{app:converse2}
\begin{proof}
The proof is based on a symmetrization argument (similar to, e.g.\cite[Theorem 1]{kosut2014polytope}). Suppose there is a code $\mathbb{C}$, which has rate $R>0$. There exist two messages $w_1$ and $w_2$ such that the packets induced by this code on edges in $A_1$ and $A_2$ are $\mathbb{C}(w_1,A_1),\mathbb{C}(w_1,A_2)$ and $\mathbb{C}(w_2,A_1),\mathbb{C}(w_2,A_2)$, respectively. The adversary then adopts the following attack strategy such that the receiver cannot distinguish which one of $w_1,w_2$ is transmitted, thus causing non-vanishing probability of decoding error.
\begin{itemize}
\item If $w_1$ is transmitted, then Calvin replaces packets $\mathbb{C}(w_1,A_1)$ by $\mathbb{C}(w_2,A_1)$;
\item If $w_2$ is transmitted, then Calvin replaces packets $\mathbb{C}(w_2,A_2)$ by $\mathbb{C}(w_1,A_2)$;
\end{itemize} 
Then once the destination node receives $\mathbb{C}(w_2,A_1),\mathbb{C}(w_1,A_2)$, the decoder cannot distinguish between the above two events, and so $w_1$ and $w_2$ are not distinguishable, and hence the probability of decoding error is at least 1/2.
\end{proof}

%\section{Knowledge of Each Communication Party}
%\label{app:wkww}
%\begin{figure}[!htbp]
%\centering
%\includegraphics[scale=0.5]{wkww.pdf}
%\caption{Summary of ``who knows what when": ``Y" represents yes and ``N" represents no. Adversaries do not know the common randomness between source and honest nodes. Source, intermediate nodes and destination have no knowledge of the network topology, while the adversaries know the network topology completely. We also assume the adversaries know the transmitted message and the coding coefficients/link transmissions non-causally (\textcircled{2}\textcircled{4}). By \textcircled{1} and \textcircled{3}, we mean that the encoder, intermediate nodes and decoder only have partial knowledge on link transmissions and coding coefficients. The chronological order of the left most column describes the following relations: common randomness, network topology and message are independent with each other. The adversaries then based on the knowledge to determine the adversarial locations. Link transmissions and coding coefficients depend on common randomness, network topology, message and adversaries' locations.}
%\label{tbl:wkww}
%\end{figure}

\section{Table of Notations}
All notations used are listed in the following table.
\begin{table}[!htbp]
\begin{center}
\small
\begin{tabular}{ |l|l|l| }
  \hline
  \multicolumn{3}{|c|}{Model Parameters} \\
  \hline
  Notation & Meaning & Value/Range\\
  \hline 
  ${\cal G} = ({\cal V}, {\cal E})$  & directed acyclic network with vertex set ${\cal V}$  and edge set ${\cal E}$ 
  & - \\
  $v_0$   & source node                & $v_0\in{\cal V}$\\
  $t_i$   & the $i$-th receiver node   & $t_i\in{\cal V}$\\
  ${\cal E}_{in}(v)$  & set of incoming edges at node $v$         & $\subseteq\mathcal{E}$\\
  ${\cal E}_{out}(v)$ & set of outgoing edges from node $v$       & $\subseteq\mathcal{E}$\\
  ${\cal N}_{in}(v)$  & set of adjacent upstream nodes of $v$     & $\subseteq\mathcal{V}$\\
  ${\cal N}_{out}(v)$ & set of adjacent downstream nodes of $v$   & $\subseteq\mathcal{V}$\\
  $d_{\text{in}}$    & maximum node in-degree in ${\cal G}$       & \\
  $d_{\text{max}}$    & maximum node degree in ${\cal G}$         & \\  
  \hline
  \multicolumn{3}{|c|}{Code Parameters} \\
  \hline
  $p$  & a fixed prime number  & \\
  $q$  & field size            & $q = p^m$ for some positive integer $m$\\
  $F$  &  simplified notation for finite field        & $F: = \mathbb{F}_{p^m}$\\
  $M$  & message     & $\in [q^{nr}]$\\
  $s$  & dimension of shared secrets    & \\
  $\mathbb{S}_{u}$  & shared secrets between source $v_0$ and node $u$   & $\in F^s$\\
  $N$  & block length in bits & \\
  $n$  & number of payload (finite field elements) in each packet & \\
%  $\Delta$ & number of redundancy (finite field elements) in each packet & \\
  $P_e$ & error probability  & \\
  $R$   & rate of network codes & $R = \frac{\log q^{nr}}{N}$\\
%  $\mu$ & number of shared secrets (finite field elements) per message & \\
%  $r_{ss}$ & ratio between number of shared secrets and information symbols & $r_{ss} = \frac{\mu}{nr}$\\
%  $\mathtt{T}$ & Overall computational complexity &  \\ 
  \hline
\end{tabular}
\caption[Table caption text]{Notations}
\label{table: notation}
\end{center} 
\end{table}

\bibliographystyle{ieeetr}
\bibliography{manuscripts}
\end{document}